\newtheorem{theorem}{Theorem}
\DeclareMathOperator*{\argmin}{arg\,min}
\DeclareMathOperator*{\argmax}{arg\,max}
\newcommand\myeq{\mkern1.2mu{=}\mkern1.2mu}
\newcommand\myplus{\mkern1.2mu{+}\mkern1.2mu}
\newcommand\myminus{\mkern1.2mu{-}\mkern1.2mu}
\newcommand\mynoteq{\mkern1.2mu{\neq}\mkern1.2mu}
\newcommand\myin{\mkern2mu{\in}\mkern 2mu}
\newcommand{\D}{\mathcal{D}}
\newcommand{\boldtheta}{\boldsymbol{\theta}}
\newcommand{\boldR}{\mathbf{r}}
\title{PATHFINDER: Designing Stimuli for Neuromodulation through data-driven inverse estimation of non-linear functions}
\author{
\thanks{This material is based upon work supported by the Naval Information Warfare Center (NIWC) Atlantic and the Defense Advanced Research Projects Agency (DARPA) under
Contract No. N65236-19-C-8017. Any opinions, findings and conclusions or
recommendations expressed in this material are those of the authors and do
not necessarily reflect the views of the NIWC Atlantic and DARPA.}\\
    Chaitanya Goswami$^{1}$, and
    Pulkit Grover$^{1}$
}
\begin{document}

\maketitle
\begin{abstract}
There has been tremendous interest in designing stimuli (e.g. electrical currents) that produce desired neural responses, e.g., for inducing therapeutic effects for treatments. Traditionally, the design of such stimuli has been model-driven. Due to challenges inherent in modeling neural responses accurately, data-driven approaches offer an attractive alternative. The problem of data-driven stimulus design can be thought of as estimating an inverse of a non-linear ``forward" mapping, which takes in as inputs the stimulus parameters and outputs the corresponding neural responses. In most cases of interest, the forward mapping is many-to-one, and hence difficult to invert using traditional methods. Existing methods estimate the inverse by using conditional density estimation methods or numerically inverting an estimated forward mapping, but both approaches tend to perform poorly at small sample sizes. In this work, we develop a new optimization framework called PATHFINDER, which allows us to use regression methods for estimating an inverse mapping. We use toy examples to illustrate key aspects of PATHFINDER, and show, on computational models of biological neurons, that PATHFINDER can outperform existing methods at small sample sizes. The data-efficiency of PATHFINDER is especially valuable in stimulus design as collecting data is expensive in this domain.
\end{abstract}


%

\section{Introduction}\label{sec:introduction}
{N}{euromodulation} refers to altering neural activity\footnote{We use ``neural response'' and ``neural activity'' interchangeably.} through targeted delivery of a stimulus (e.g. electrical, chemical, ultrasound), and is one of the fastest growing areas of medicine, impacting millions of patients~\citep{krames2009neuromodulation}. {Many neurological disorders and diseases are a result of atypical neural activity in the brain and can be treated by providing appropriate stimuli which can ``correct'' this atypical neural activity.} In experiments, {controlling the neural activity through stimuli has shown promise} in treating Parkinsonian  symptoms~\citep{mastro2017cell,spix2021population}, stroke rehabilitation~\citep{cheng2014optogenetic},  regulating depression~\citep{chaudhury2013rapid}, etc. Indeed, the ability to systematically design stimuli that produce desired neural activity in the brain (which corrects the atypical activity) is key to treating several neurological disorders.

{Typically, a stimulus is characterized by a set of \textit{parameters}. For example, in~\citet{spix2021population}, the authors inject electrical currents (stimuli) in the brain to selectively stimulate a particular type of neuron (controlling neural activity) for treating Parkinsonian symptoms in mice. Their stimulus consisted of three parameters, namely, amplitude, frequency, and duration of the injected electrical currents. 
A common way to mathematically model the relation between the parameters of stimuli and the neural activity/response is through a ``forward mapping/function'' that takes as input the parameters and outputs the neural response~\cite{kandel2000principles}. Then, the problem of designing stimuli that produce desired neural responses reduces to \textbf{inverting the forward mapping}. Correspondingly, the set of stimulus parameters can be obtained by plugging in the desired neural response as an input to the inverse.}

{Broadly, there are three major challenges in estimating the inverse of the forward mapping. First, as the forward mapping depends upon the parameters being explored, for novel parameters, the forward mapping is generally unknown and needs to be estimated from the data~\cite{spix2021population}. Second, in most cases of interest, multiple parameter values lead to the same neural response. For example, in electrical stimulation of the brain, many stimuli produce the same neural firing rate and consequently, the same neural response~\cite{izhikevich2007dynamical}. This implies that the forward mapping in many cases is many-to-one, and hence, non-invertible. Therefore, instead of estimating an inverse, we need to estimate a pseudoinverse of the forward mapping (see Sec.~\ref{sec:prob_form} for more details). Third, the amount of data available in such healthcare settings is limited, and in general, the data-collection process is quite expensive (e.g. in~\citet{spix2021population}, authors could only collect dataset sizes of $\sim300$). So, it is desirable to estimate the pseudoinverse in a data-efficient manner.} 


{Recently, there has been significant interest in designing stimuli using data-driven methods~\cite{gonccalves2020training,bashivan2019neural,ponce2019evolving, walker2019inception, spix2021population} through some form of pseudoinverse estimation.} {Broadly, two methodologies have been explored in literature for estimating the pseudoinverse to design stimuli. One is to use conditional density estimation (CDE) methods to learn the conditional density of stimulus parameters conditioned on neural responses, and then use the conditional mode as the pseudoinverse~\cite{gonccalves2020training}. The other, what we call ``Naive Inverse'' (NI), is to estimate the forward mapping (e.g. using a neural network) and then numerically invert it~\cite{walker2019inception,ponce2019evolving}. We provide a detailed discussion regarding both approaches in Sec.~\ref{sec:background}. {Briefly, CDE-based methods are known to require \emph{more data than their regression counterparts}~\cite{holmes2012fast}, and NI approaches suffer because \emph{numerical inversion blows up even small errors in forward models (see Sec.~\ref{sec:discussion})}. These observations are reflected in our results in Sec.~\ref{sec:results}.}}

{In this work, we propose PATHFINDER: a novel pseudoinverse estimation framework that addresses the shortcomings of the two methodologies mentioned above. Specifically, PATHFINDER adapts regression techniques\footnote{Traditional regression techniques do not work for pseudoinverse estimation, see~\cite{chen2016nonparametric} for more details.} to \emph{directly} estimate a pseudoinverse, thereby circumventing the need of inverting an estimated forward model, while still requiring less data than CDE methods (since regression techniques typically require less data than their equivalent CDE counterpart). Sec.~\ref{sec:algo_des} provides a detailed description of PATHFINDER. The key insight utilized by PATHFINDER is that a non-invertible function can still be inverted over a restricted domain. If such a restricted domain were known \textit{a priori}, the inverse mapping can be estimated using traditional regression methods. PATHFINDER jointly learns a restricted domain and the inverse mapping over it. To do so, {PATHFINDER} uses a weighted $l_2$ loss, where the weights are also learned from data. On convergence, the weights approximate the indicator function over the restricted domain, effectively learning it (theoretically justified in Sec.~\ref{sec:results}). }


In Sec.~\ref{sec:results}, we compare the performance of PATHFINDER, with two CDE methods: Masked Autoregressive Flows (MAF)~\cite{papamakarios2017masked} and Mixture Density Networks (MDN)~\cite{bishop1994mixture}, as well as a naive inversion of a deep network (NI) in three toy examples, and a neuromodulation setup of electrically stimulating two neuron models. We quantify the performance of each method at different training dataset sizes and observe that PATHFINDER outperforms all the other methods for small dataset sizes, justifying our intuition discussed above. We discuss the results and limitations of our study in Sec.~\ref{sec:discussion}.

\section{Problem Statement and Notation}\label{sec:prob_form}


{We assume that each stimulus is characterized by $n$-different parameters, denoted as $\{\theta_i\}_{i=1}^n$, where $\theta_i\in\mathbb{R}$. Now, we define $\boldtheta=[\theta_1,\ldots,\theta_n]^T\in\mathbb{R}^n$ as the collection of all the $n$ parameters. $\Theta\subset\mathbb{R}^n$ denotes the collection of all allowed stimuli. Let the number of neural responses of interest be $m$ (e.g. in \citet{spix2021population}, authors aimed for selective stimulation between two neuron types, so $m=2$ for their case). Define, $\mathbf{r}=[r_{1},\ldots,r_{m}]^T \in\mathbb{R}^m$. Let $\mathcal{R}_{\Theta}$ be the collection of all distinct neural responses produced by all the stimuli $\theta\in\Theta$.}
For finding a {stimulus} that attains a desired {neural response}, we assume that we only have access to a dataset $\D\myeq\{\boldsymbol{\theta}_{i},\mathbf{r}_i\}_{i=1}^N$, formed by $N$ stimulus-response pairs.

    \textit{Problem Statement}: Given a dataset $\D = \{\boldsymbol{\theta}_{i},\mathbf{r}_i\}_{i=1}^N$, where $\{\boldsymbol{\theta}_{i}\}_{i=1}^N$ are independent and identically distributed (i.i.d.) samples from a distribution $p(\boldsymbol{\theta})$ on $\Theta$, and $\mathbf{r}_{i}$ is the neural response generated by the stimulus characterized by $\boldtheta_i$, the goal is to design/find a parameter $\boldtheta_{des}\in\Theta$ such that the stimulus corresponding to $\boldtheta_{des}$ can elicit (close to) a desired (user-specified) neural response $\mathbf{r}_{des}\in\mathcal{R}_{\Theta}$.

{Note that we only allow access to the (fixed) dataset $\D$. To restrict the scope of our work, we \emph{do not} allow the acquisition of more data, i.e., actively sampling based on inferences from existing data. All the methods discussed in this work can be extended to their active-sampling versions, and the performance of these methods in that setting will be explored in future work. We allow restricted access to neurons for fine-tuning some hyper-parameters of each method (see Sec.~\ref{sec:results}), but to compare fairly, this data is not used to train any of the methods. } 
{An important design parameter in our problem is choosing the set of allowed stimuli i.e. $\Theta$. This is decided \textit{a priori} by the user, typically based on domain knowledge, e.g.,~\cite{spix2021population} choose amplitude, frequency, and duration of the electrical waveforms as parameters of stimuli. For this work, we will assume that an appropriate $\Theta$ has already been chosen.}


We denote the forward mapping from the stimulus parameter space $\Theta$ to the neural response space $\mathcal{R}_{\Theta}$ as $g:\Theta\rightarrow\mathcal{R}_{\Theta}$. {For many-to-one functions such as $g$, a natural definition of a pseudoinverse is a mapping $g^{-1}:\mathcal{R}_{\Theta}\rightarrow\Theta_{inv}$, where $\Theta_{inv}\subseteq \Theta$ is a restricted domain such that $g(g^{-1}(\mathbf{r}))\myeq\mathbf{r}\ \forall\ \mathbf{r}\myin\mathcal{R}_{\Theta}$. E.g., $\cos:\mathbb{R}\rightarrow[-1,1]$ is many-to-one and not invertible, but restricting the domain of $\cos$ to $[0,\pi]$ helps define the familiar pseudoinverse:  $\cos^{-1}:[-1,1]\rightarrow[0,\pi]$. Note, for a pseudoinverse, equality in the other direction, i.e., $g^{-1}(g(\boldsymbol{\theta}))\myeq\boldsymbol{\theta}\ \forall\ \boldsymbol{\theta}\myin\Theta$ is not necessarily true, since the domain of $g^{-1}$ is $\Theta_{inv}$ (a subset of $\Theta$). A function can have multiple pseudoinverses (e.g., $\cos(\cdot)$, with $\Theta_{inv}\myeq[0\myplus 2n\pi,\pi\myplus2n\pi]$, has infinitely many pseudoinverses, one for each $n\myin\mathbb{N}$). For the goal of this work, estimating any one pseudoinverse suffices. We denote the estimated pseudoinverse of $g$ as 
$\widehat{g^{-1}}$. }

\section{Background and existing approaches}\label{sec:background}
{We now discuss natural approaches based on existing literature for this problem. }

{\textbf{Naive Inversion}: Conceptually, the simplest approach to estimating a pseudoinverse is what we call naive inversion (NI):  numerically invert an estimate $\widehat{g}(\cdot)$ of the forward mapping $g$. The general framework for NI is as follows:} 
\begin{align}
    \widehat{g} &= \argmin_{f\in\mathcal{F}_F} \sum_{(\boldsymbol{\theta}_i,\mathbf{r}_i)\in\mathcal{D}}l\left(f(\boldsymbol{\theta}_i),\mathbf{r}_i\right),\label{eq:naive_inversion:forward_loss}\\
    \widehat{g^{-1}}& = \argmin_{f\in\mathcal{F}_I} \sum_{(\boldsymbol{\theta}_i,\mathbf{r}_i)\in\mathcal{D}}l\left(\widehat{g}(f(\mathbf{r}_i)),\mathbf{r}_i\right),\label{eq:naive_inversion:inverse_loss}
\end{align} 
{where $\mathcal{F}_F$ and $\mathcal{F}_I$ denote the family of functions being considered for estimating the forward and inverse mapping (respectively), $\widehat{g^{-1}}$ is the estimate of pseudoinverse and $l(\cdot,\cdot)$ is an appropriate loss function (e.g. the $l_2$ loss)}.

{We now discuss a few recent works implementing the NI framework}. \citeauthor{ponce2019evolving} propose XDream, a genetic algorithm that fine-tunes the input visual stimuli to the generator of a pre-trained generative adversarial network~\cite{dosovitskiy2016generating} to maximize the response of neurons in the visual cortex.~\citeauthor{bashivan2019neural} train a convolutional neural network (CNN) using data collected from monkeys and use a softmax loss to find a visual stimulus that selectively activates a group of neurons in `V1' (the primary visual cortex).~\citeauthor{walker2019inception} develop a closed-loop experimental paradigm for optimizing visual stimulation in rats. They train a CNN over multiple data-collection sessions. After each session, they use the trained CNN to find the image that maximally excite the target neurons, through gradient ascent. Then, they retrain the CNN with these new images to find the images which maximally excite the target neurons. {A major drawback of the NI approaches is that~\eqref{eq:naive_inversion:inverse_loss} is often hard to optimize {(see Sec.~\ref{sec:discussion})}, as $\widehat{g}$ is typically a complicated function (e.g. deep neural network), which makes the overall loss prone to getting stuck in local minima. {This could explain why NI performs poorly in our numerical study in Sec.~\ref{sec:results}}. 

{\textbf{Conditional density estimation:} 
The modes of  the conditional distribution $p(\boldtheta|\boldR)$ are, by definition, the \textit{most likely} stimuli for producing the neural response $\boldR$. Thus, a natural candidate for the pseudoinverse of $g$ is:}  
\begin{align}
    \widehat{g^{-1}}(\boldR) := \argmax_{\boldsymbol{\theta}\in \Theta} {p}(\boldtheta|\boldR)\equiv \argmax_{\boldsymbol{\theta}\in \Theta} {p}(\boldtheta,\boldR), \label{eq:cde:inverse_mapping} 
\end{align}
where $p(\boldtheta,\boldR)$ denotes the joint density. {Two commonly-used CDE methods for estimating pseudoinverses are Masked Autoregressive Flow (MAF)~\cite{papamakarios2017masked} and Mixture Density Network (MDN)~\cite{bishop1994mixture}}}. We now discuss works that employ these approaches.~\citet{gonccalves2020training} uses Sequential Neural Posterior Estimation  (SNPE)~\cite{papamakarios2016fast} for estimating stimulus parameters in computational models of neurons from the visual cortex. SNPE uses MAF or MDN as a conditional density estimator. MAF is an instance of normalizing flows~\cite{papamakarios2019normalizing}. A particular drawback of normalizing flows (and hence MAF) is that we need to solve~\eqref{eq:cde:inverse_mapping} using an optimization technique (e.g. gradient descent~\cite{bishop2006pattern}), which may get stuck in a local minima. 
\citet{unni2020deep}, and \citet{zen2014deep}, though not focused on neural stimulation, also use MDNs to estimate pseudoinverses. MDNs with Gaussian component distributions are particularly attractive for learning pseudoinverses as the means predicted by the MDN provide an approximate estimate of the modes, thus not requiring us to solve~\eqref{eq:cde:inverse_mapping}. 

Broadly, CDE-based approaches require a large amount of data and are hard to implement in high dimensions~\cite{papamakarios2019sequential}. Regression methods, although still suffering from the curse of dimensionality, can perform better than their CDE counterparts at a lower number of data samples~\cite{holmes2012fast}, which we also observed in our simulation results (see Sec.~\ref{sec:results}). One regression-based approach is PATHFINDER, proposed here and discussed next. In Sec.~\ref{sec:results}, we compare MDN, MAF, NI, and PATHFINDER, in toy examples and a neuromodulation context.  
\section{PATHFINDER}\label{sec:algo_des}
{PATHFINDER estimates a pseudoinverse} by exploiting the insight, as discussed in Sec.~\ref{sec:prob_form}, that many-to-one functions can be inverted over appropriately restricted domains. If such a restricted domain $\Theta_{inv}$ were known \textit{a priori}, then a restricted dataset could be created by excluding all the data points $(\boldtheta_i,\boldR_i)$ where {$\boldtheta_i\not\in\Theta_{inv}$ from} the dataset $\D$. As $g$ is invertible over $\Theta_{inv}$, any traditional regression technique applied to this restricted dataset would yield a pseudoinverse corresponding to $\Theta_{inv}$. Formally, a pseudoinverse on $\Theta_{inv}$ could be estimated as: 
 \begin{equation}\label{eq:pathfinder:revised_loss}
     \widehat{g^{-1}}=\argmin_{f\in\mathcal{F}}\frac{1}{N}\sum_{i=1}^N\mathbb{I}[\boldsymbol{\theta}_i\in\Theta_{inv}]\|f(\mathbf{r}_{i})-\boldtheta_i\|_2^2,
\end{equation}
where $\mathbb{I}(\cdot)$ is the indicator function and $\mathcal{F}$ is the family of functions used for regression. The challenge is that we only have access to a dataset (and not the forward mapping), so $\Theta_{inv}$ is not known \textit{a priori}. To address this, PATHFINDER \textit{jointly} estimates \textit{both} a restricted domain and the corresponding pseudoinverse as follows:   
  \begin{align}
\widehat{g^{-1}},\{\widehat{w}(\boldsymbol{\theta}_i)\}_{i=1}^N&\myeq\argmin_{\substack{f\in\mathcal{F},\\w(\boldsymbol{\theta}_i)\geq 0}}\frac{1}{N}\sum_{i=1}^N\underbrace{w(\boldsymbol{\theta}_i)\|f(\mathbf{r}_{i})-\boldsymbol{\theta}_i\|_2^2}_{loss}\myplus\nonumber\\ &\underbrace{\beta \sum_{i=1}^Nw^2(\boldsymbol{\theta}_i)}_{regularizer}, s.t.\  \frac{1}{N}\sum_{i=1}^Nw(\boldsymbol{\theta}_i)\myeq1,\label{eq:pathfinder:main_loss}
\end{align}
where $\beta\myin\mathbb{R}^{+}$ is a hyper-parameter. Equation~\eqref{eq:pathfinder:main_loss} follows the philosophy of \eqref{eq:pathfinder:revised_loss},  approximating $\mathbb{I}[\boldsymbol{\theta}_i\myin \Theta_{inv}]$ in \eqref{eq:pathfinder:revised_loss} by $\widehat{w}(\boldsymbol{\theta}_i)$, which are learned jointly with $\widehat{g^{-1}}$. 

How does the PATHFINDER optimization~\eqref{eq:pathfinder:main_loss}  incentivize learning of a restricted domain? If only  parameters belonging to a restricted domain have non-zero weights $(\widehat{w}(\boldsymbol{\theta}_i))$, the loss term would be low because the corresponding inverse mapping can be estimated accurately. Hence, the loss term encourages PATHFINDER to learn weights that are non-zero only for  some restricted domain over which $g$ is invertible.
 It is desirable for {an estimate of the} pseudoinverse $\widehat{g^{-1}}$ (as discussed in Sec.~\ref{sec:prob_form}) to have as its domain the entire $\mathcal{R}_{\Theta}$, or at least as large a subset as possible, so it can provide a stimulus for as many neural responses as possible. This implies that for PATHFINDER to estimate a pseudoinverse, the image of the restricted domain learned by it should be as large as possible. This condition is not ensured by the loss term in \eqref{eq:pathfinder:main_loss}, since it is small for any restricted domain (e.g., consider two restricted domains $[0,1]$ and $[0,\pi]$ of $\cos(\cdot)$.  $[0,\pi]$ is more desirable here. The loss term in \eqref{eq:pathfinder:main_loss} is small for both domains, and is unable to discriminate between them). 

To encourage PATHFINDER to learn a restricted domain with a large image, we use the following observation: The largest restricted domain $\Theta_{max}$ (measured by its total probability under $p(\boldtheta)$) over which $g$ can be inverted, also has the largest image $g(\Theta)=\mathcal{R}_{max}$ (measured under $p(\boldtheta)$) among all such invertible restricted domains. I.e., the image of the largest restricted domain is equal to $\mathcal{R}_{\Theta}$, up to a set of zero probability (proof in Supp. Sec. 2). 
This observation is incorporated in the regularizer and the constraint in \eqref{eq:pathfinder:main_loss} to encourage PATHFINDER to learn as large a restricted domain as possible. To see this, let us analyze the following optimization that distills the effect of the regularizer for distinguishing among restricted domains over which $g$ is invertible 
(the first term in \eqref{eq:pathfinder:main_loss} is  low for such 
domains): 
$\{w^*_i\}_{i=1}^N=\argmin_{\{w_i\}_{i=1}^N}\sum_{i=1}^Nw_i^2,\text{s.t.}\ \frac{1}{N}\sum_{i=1}^Nw_i=1,\nonumber  \sum_{i=1}^N\mathbb{I}[w_i\mynoteq 0]=K,\  w_i\geq0\ \forall\ i.$
This optimization explores the behaviour of the regularizer when only $K$  out of $N$ total weights are non-zero, and has the solution: $w^*_i\myeq N/K$ for any $K$ out of $N$ weights and the rest are 0. Hence, the regularizer term in \eqref{eq:pathfinder:main_loss} scales approximately as $\sim 1/K^2$ for $K$ non-zero weights, which incentivizes making a larger number of $\widehat{w}(\theta_i)$ to be non-zero, encouraging PATHFINDER to consider as large a restricted domain as possible. 

Thus, there is a careful interplay between the loss, the regularizer, and the constraints in \eqref{eq:pathfinder:main_loss}. The loss encourages learning non-zero weights ($\widehat{w}(\boldsymbol{\theta}_i)$) only over a restricted domain; the regularizer and the constraints try to make the restricted domain as large as possible. A desirable pseudoinverse can be learned by carefully choosing the value of $\beta$.


\section{Results}\label{sec:results}

\subsection{Theoretical Result}\label{sec:theory}
We provide a formal justification for the intuition behind PATHFINDER (discussed in Sec.~\ref{sec:algo_des}), albeit under idealized assumptions of (sufficiently rich $\mathcal{F}$, noiseless data). {The optimization problem of PATHFINDER defined in \eqref{eq:pathfinder:main_loss} can be viewed as approximating the following problem:}
\begin{align}
g^{-1*},w^*\myeq \argmin_{f\in\mathcal{F},w\in\mathbb{W}}&\mathbb{E}_{p(\boldtheta)}\left[w(\boldsymbol{\theta})\|f(\mathbf{r})-\boldsymbol{\theta}\|_2^2\myplus\beta w^2(\boldsymbol{\theta})\right],\nonumber\\&\ s.t.\  \mathbb{E}_{p(\theta)}\left[w(\boldsymbol{\theta})\right]\myeq1,\label{eq:algorithmDes:exactlossform}
\end{align}
{where $\mathbb{W}=\{w:\Theta\rightarrow\mathbb{R}^{+}|\ w$ is a measurable function\}, $\mathcal{F}$ is the family of the functions being considered for regression and $\mathbb{E}_{p(\boldsymbol\theta)}[\cdot]$ is the expectation with respect to $p(\boldsymbol{\theta})$ (defined in Sec.~\ref{sec:prob_form})}. In the following theorem, $g^{-1}_{max}$ is the pseudoinverse corresponding to $\Theta_{max}$, the largest restricted domain over which $g$ is invertible (see Sec.~\ref{sec:algo_des}).
 
\begin{theorem}\label{onlyTheorem} Assume that  $g:\Theta\rightarrow\mathcal{R}_{\Theta}$ is a Lipschitz $l_2$-integrable function such that $g^{-1}_{max}$ as defined above exists, $g^{-1}_{max}\in\mathcal{F}$ where $\mathcal{F}$ is the family of functions being considered for estimation, and the dataset is noiseless. Then, for any $\epsilon>0,
\ \exists$ an $A_{\epsilon}>0$ such that for $0<\beta\leq A_{\epsilon}$:  
\begin{align}
&\mathbb{E}_{p(\boldsymbol\theta)}\left[\|g^{-1*}(\mathbf{r})-g^{-1}_{max}(\mathbf{r})\|_2^2\right] \leq  c_1\epsilon+c_2\sqrt{\epsilon},\label{eq:theory:result}
\end{align}
where $g^{-1*}$ is the solution of \eqref{eq:algorithmDes:exactlossform}, {and $c_1,c_2\in\mathbb{R}^{+}$.}
\end{theorem}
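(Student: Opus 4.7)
My plan is to (i) solve the inner minimization over $w$ in closed form via KKT, (ii) upper-bound the joint optimum by plugging in an ``oracle'' pair, and (iii) use this upper bound together with the maximality characterization of $\Theta_{max}$ (Supp.~Sec.~2) to deduce the target $L^2$ bound on $g^{-1*}$.

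For fixed $f\in\mathcal{F}$, the inner problem is strictly convex in $w$, and KKT gives the closed form $w^*_f(\boldtheta) = (\lambda_f - e_f(\boldtheta))_+/(2\beta)$, where $e_f(\boldtheta):=\|f(g(\boldtheta))-\boldtheta\|_2^2$ and $\lambda_f$ is fixed by $\mathbb{E}_p[w^*_f]=1$. This already makes explicit that $w^*$ concentrates on sub-level sets of the reconstruction error. Plugging the oracle pair $(g^{-1}_{max},w_{max})$, with $w_{max}(\boldtheta) = \mathbb{I}[\boldtheta\in\Theta_{max}]/p_{max}$ and $p_{max}:=\Pr_p(\Theta_{max})$, into~\eqref{eq:algorithmDes:exactlossform} kills the reconstruction term (since $g^{-1}_{max}\circ g$ is the identity on $\Theta_{max}$) and leaves only $\beta/p_{max}$. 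Using the assumption $g^{-1}_{max}\in\mathcal{F}$, feasibility of the oracle gives $L(g^{-1*},w^*)\leq \beta/p_{max}$, and non-negativity of the two summands yields the two key estimates $\mathbb{E}_p[w^*\,e_{g^{-1*}}]\leq \beta/p_{max}$ and $\mathbb{E}_p[(w^*)^2]\leq 1/p_{max}$.

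Given $\epsilon>0$, a Markov-type inequality applied to $\mathbb{E}[w^* e_{g^{-1*}}]$ shows that, for $\beta$ small enough (this defines $A_\epsilon$), $w^*$ places almost all of its unit mass on $S_\epsilon := \{\boldtheta:e_{g^{-1*}}(\boldtheta)\leq \epsilon\}$. By the maximality statement of Supp.~Sec.~2 together with the Lipschitz hypothesis on $g$, any set on which $g$ is $\epsilon$-approximately injective has $p$-measure at most $p_{max} + O(\sqrt{\epsilon})$, so the effective support of $w^*$ is close in measure to $\Theta_{max}$. I then rewrite the target quantity $\mathbb{E}_{p(\boldtheta)}[\|g^{-1*}(g(\boldtheta))-g^{-1}_{max}(g(\boldtheta))\|_2^2]$ as an integral over $\boldtheta$ and split at $\boldtheta\in\Theta_{max}$ vs.\ $\boldtheta\notin\Theta_{max}$. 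On $\Theta_{max}\cap S_\epsilon$ the integrand is exactly $e_{g^{-1*}}\leq\epsilon$, producing the $c_1\epsilon$ term; on the tail, a Cauchy--Schwarz estimate using the $l_2$-integrability assumption together with the $O(\sqrt{\epsilon})$ tail measure produces the $c_2\sqrt{\epsilon}$ term.

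The hard part is precisely this translation from the $w^*$-weighted reconstruction loss on the $\boldtheta$-side to an unweighted $L^2$-closeness statement under the pushforward measure $p(\boldR)$. Contributions to $p(\boldR)$ coming from preimages in $\Theta\setminus\Theta_{max}$ never appear in the weighted loss at all, so bounding them requires the Lipschitz hypothesis on $g$ to propagate closeness information from $\Theta_{max}$ across preimages, together with the maximality of $\Theta_{max}$ to control the extra pushforward mass. Balancing the pointwise error scale $\epsilon$ against the $O(\sqrt{\epsilon})$ tail-measure scale is exactly what yields the two-term bound $c_1\epsilon + c_2\sqrt{\epsilon}$.
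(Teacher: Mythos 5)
Your steps (i) and (ii) are correct and well executed: the pointwise KKT solution $w^*_f=(\lambda_f-e_f)_+/(2\beta)$, the oracle pair $\bigl(g^{-1}_{max},\ \mathbb{I}[\boldsymbol{\theta}\in\Theta_{max}]/p_{max}\bigr)$ giving the optimal-value bound $\beta/p_{max}$, and the two consequences $\mathbb{E}_p[w^*e_{g^{-1*}}]\le\beta/p_{max}$ and $\mathbb{E}_p[(w^*)^2]\le 1/p_{max}$, followed by a Markov argument showing $w^*$ concentrates its unit mass on $S_\epsilon$ for $\beta\le A_\epsilon$. This is the natural opening and gives genuinely useful control.

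The gap is in step (iii), which you yourself flag as ``the hard part'' but then only gesture at. Two concrete problems. First, your bridging lemma --- that any set on which $g$ is $\epsilon$-approximately injective has $p$-measure at most $p_{max}+O(\sqrt{\epsilon})$ --- does not follow from Lipschitz continuity of $g$: Lipschitzness upper-bounds $\|g(\boldsymbol{\theta})-g(\boldsymbol{\theta}')\|$ and therefore works \emph{against} injectivity, not for it. What you can actually extract is that fibers of $g$ restricted to $S_\epsilon$ have diameter at most $2\sqrt{\epsilon}$; converting that into a measure bound requires controlling the measure of an $O(\sqrt{\epsilon})$-fattening of an invertibility domain, which needs structural hypotheses (bounded density of $p(\boldsymbol{\theta})$, finitely many branches of $g$) you never invoke. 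Second, and more fundamentally, everything you establish about $w^*$ is measure-theoretic ($p(S_\epsilon)\approx p_{max}$), whereas the conclusion requires the set-theoretic statement that $S_\epsilon$ essentially coincides with $\Theta_{max}$ rather than with some other invertibility domain of the same probability. For $\cos(2\pi\theta)$ on $[0,3]$ there are six disjoint maximal invertibility domains of equal measure; the objective \eqref{eq:algorithmDes:exactlossform} cannot distinguish them, so $g^{-1*}$ may settle on a domain disjoint from $\Theta_{max}$, in which case $\mathbb{E}_{p(\boldsymbol{\theta})}[\|g^{-1*}(\mathbf{r})-g^{-1}_{max}(\mathbf{r})\|_2^2]$ stays bounded away from zero no matter how small $\beta$ is. Your split of the target integral at $\Theta_{max}$ silently assumes $p(\Theta_{max}\setminus S_\epsilon)=O(\sqrt{\epsilon})$, which is exactly this unproved identification. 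A complete argument must either assume (or prove) essential uniqueness of $\Theta_{max}$ together with a quantitative stability statement --- any approximately invertible set of near-maximal measure is close to $\Theta_{max}$ in symmetric difference --- or weaken the conclusion to closeness to \emph{some} maximal pseudoinverse.
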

\begin{proof}
The proof is provided in Supp. Sec.~3.
\end{proof}
 Theorem \ref{onlyTheorem} implies that if the global optimum of \eqref{eq:algorithmDes:exactlossform} is attainable, then given enough data, PATHFINDER can estimate the pseudoinverse $g^{-1}_{max}$ with arbitrary precision by tuning $\beta$. Note, as the optimization problem defined in \eqref{eq:algorithmDes:exactlossform} is non-convex, guaranteeing convergence to this optimum is non-trivial, but in practice, stochastic gradient descent methods~\cite{bishop2006pattern} performed reasonably well in solving the PATHFINDER loss.

\subsection{Simulation Results: Implementation Details}
We compared the performance of PATHFINDER with 3 competing techniques: MAF, MDN, and NI (described in Sec.~\ref{sec:background}) in 3 toy examples and a neuromodulation setup of electrically stimulating neuron models. A brief description of implementation details is given below with more details provided in Supp. Sec. 5.

\textbf{Splitting the dataset}:
Since the input to our model is the neural response $\mathbf{r}_i$ and not the stimulus parameter $\boldtheta_i$, we split our data into training, test, and validation sets in the following manner:
Split all possible neural responses present in $\D$ into training, validation, and test neural responses. Let $\mathcal{R}_V$, $\mathcal{R}_{Tr}$, and $\mathcal{R}_{Te}$ be the sets containing the validation, training, and test neural responses, respectively. Remove all the $(\boldtheta_i,\boldR_i)$ from the original dataset $\D$ where $\boldR_i\in \mathcal{R}_{Te}\cup\mathcal{R}_V$, i.e., $\boldR_i$ is present in the test and validation set, to construct the training dataset $\D_{Tr}$. Note that for any $\boldR\in \mathcal{R}_{Te}\cup\mathcal{R}_V$, there may be multiple stimulus parameters $\boldtheta$ in the original dataset which produce $\boldR$, and we remove all of them. The validation dataset $\D_{V}$ can be constructed similarly by removing $(\boldtheta_i,\boldR_i)$ from the original dataset $\D$ where $\boldR_i\in \mathcal{R}_{Tr}\cup\mathcal{R}_{Te}$. For the test set, we only store the neural responses, i.e $\mathcal{R}_{Te}$, as we want to generate stimuli that produce those neural responses.

\begin{table*}[t]
    \centering
    \begin{tabular}{||c|c|c|c||}
    \cline{2-4}
    \multicolumn{1}{c|}{}&\multicolumn{3}{|c|}{\textbf{MAE for toy forward mappings}}\\
    \hline
    \textbf{Techniques}& $r=e^{-\theta^2/2}; N=10$&$r=\cos(\theta); N=20$ &$r=(\theta^2-4)^2; N=50$\\  
    \hline
    \hline
     {PATHFINDER}    & $\mathbf{7.2}\%\pm\mathbf{2.1}\%$&$\mathbf{14.3}\%\pm\mathbf{3.1}\%$&$\mathbf{10.5}\%\pm\mathbf{1.8}\%$\\
         \hline
     {MDN}    &$10.1\%\pm2.7\%$& $20.8\%\pm 4.2\%$&$24.0\%\pm10.5\%$\\
         \hline
     {MAF}    & $29.2\%\pm3.5\%$&$31.8\%\pm3.9\%$&$22.0\%\pm2.0\%$\\
         \hline
     {NI}  &  $20.9\%\pm3.1\%$& $18.8\%\pm3.6\%$&$21.1\%\pm25.2\%$\\
    \hline
    \end{tabular}
    \caption{The table contains the NMAE (see Sec.~\ref{sec:results}) values for PATHFINDER, MAF, MDN, and NI for the three different toy examples corresponding to the simulation study discussed in Sec.~\ref{sec:sim_result:toy_example}. $N$ denotes the dataset size at which the NMAE values were calculated. The NMAE values are averaged across $50$ independent trials, and are presented with their $99\%$ confidence intervals.}
    \label{tab:1:toy_example}
\end{table*}
\textbf{Evaluating the Validation/Test Loss and figure of merit}: We will explain our figure of merit by taking an example of calculating it over the validation set. For every neural response $\boldR\in\mathcal{R}_V$, we obtain the corresponding $\widehat{\boldsymbol{\boldtheta}}$. We feed the $\widehat{\boldsymbol{\boldtheta}}$ to the neuron/model to obtain its actual neural response $\mathbf{r}^{act}$. Since we have $m$ different neural responses, we calculate normalized mean absolute error (NMAE) for each neural response. Let the maximum and minimum values that can be achieved for the $i$-th neural response be $r_{max}^{i}$ and $r_{min}^{i}$, i.e.
$r_{max}^{i} = \max_{\boldR\in\mathcal{R}_{\Theta}} e_i^{T}\boldR;\ \ \      r_{min}^{i} = \min_{\boldR\in\mathcal{R}_{\Theta}} e_i^{T} \boldR,$
where $e_i\in\mathbb{R}^m$ with $1$ at the $i$-th dimension and $0$ everywhere else. Then, we define NMAE for the $i$-th neural response as:  
 \begin{align}
     \text{NMAE} &= \frac{100}{|\mathcal{R}_V|(r_{max}^{i}\myminus r_{min}^{i})}\sum_{\boldR\in\mathcal{R}_V}\left|e_i^T\boldR^{act}\myminus e_i^T\boldR \right|,\label{eq:results:mae}
 \end{align}
 where $|(\cdot)|$ is the absolute function and $g\left(\widehat{g^{-1}}(\boldR)\right)=\boldR^{act}$. The NMAE quantifies how close the neural response produced by the predicted stimulus is to the desired neural response on a scale of $0$ to $100$. Similarly, the test NMAE can be calculated by replacing the validation set with the test set. Notice that, for calculating the NMAE, we require access to the neuron (hence the need for accessing the neuron during hyper-parameter tuning, see Sec.~\ref{sec:prob_form}).

\textbf{Implementation details of data-driven approaches}:

\textbf{MAF}: We used the implementation suggested in the original work~\cite{papamakarios2017masked}. MAF was trained to learn the joint density of data $p(\boldtheta,\boldR)$ by minimizing the negative log-likelihood (both joint and conditional densities are equivalent for estimating the pseudoinverse (see~\eqref{eq:cde:inverse_mapping}), and joint density contains more information). For toy examples, we used MAF with 3 flows, and for neuron models, we varied the number of flows from 2 to 8.
For all simulations, the corresponding Masked Autoencoder for density estimation~\cite{germain2015made} had 2 hidden layers. After each flow, a batch normalization flow (see~\citet{papamakarios2017masked}) was used. The order of the inputs was reversed after each flow. The initial order of the inputs was assigned randomly. The base density used was a standard Gaussian. The conditional mode was calculated by solving~\eqref{eq:cde:inverse_mapping} with the Adam optimizer~\cite{kingma2014adam}, with a learning rate of $10^{-3}$. 

\textbf{MDN}: We used the formulation discussed in the original work~\cite{bishop1994mixture} with minor adaptations. The component distribution was Gaussian with diagonal covariance. For all toy examples, 5 hidden layers were used. The number of mixtures used was $15,2$, and $10$ for $\cos(2\pi\theta)$, $e^{-\frac{\theta^2}{2}}$, and $(\theta^2-4)^2$, respectively. For the neuron models, the number of mixtures was varied from 10 to 100 in intervals of 10, and the number of hidden layers was varied from 5-8. The weight and mean output layer were followed by softmax and linear activation, respectively. For the variance output layer, we used the activation: $ELU(X)+1$, where $ELU(\cdot)$ is the exponential leaky unit~\cite{clevert2015fast}. The variance of each mixture was clipped at $10^{-4}$ for numerical stability. The conditional mode was calculated as the conditional mean of the most likely component of the mixture. 

\textbf{Naive Inversion:} The architecture can be visualized as an autoencoder, where the decoder serves as the forward model and the encoder as the inverse. For the toy examples, both the forward and inverse models had 5 hidden layers. For the neuron model case, the number of hidden layers was varied from 4-8, for both the forward and inverse mappings. 

\textbf{PATHFINDER:} We used two fully connected networks, one for estimating the inverse mapping ($\widehat{g^{-1}}$) (regressor network) and the other for estimating the weight mapping ($\widehat{w}$). In toy examples, both networks consist of 5 hidden layers. For the neuron models, the number of layers was varied from 5-8 for both the regressor and the weight network. The output layer of the regressor and the weight network has linear activation. For the toy examples, $\beta$ was chosen to be $10^{-4}, 0.01$, and $0.00005$ for the $\cos(2\pi\theta)$, $e^{-\frac{\theta^2}{2}}$, and $(\theta^2-4)^2$ mapping. For the neuron models, we searched across the following values of $\beta$: $0.1$, $0.01$, and $0.001$. 


$\mathrm{ReLU}$ activation was used for the hidden layers of all approaches. The number of hidden units for each hidden layer in toy examples and neuron models is $10$ and $100$, respectively. While training PATHFINDER, a simplex projection \cite{chen2011projection} was performed on the batch output of the weight network to ensure the constraints specified in \eqref{eq:pathfinder:main_loss} were met. Adam optimizer with a learning rate of $0.001$ is used for minimizing the objectives for all techniques. Batch normalization \cite{ioffe2015batch} was applied wherever applicable. Different batch sizes were used for different sizes of the training datasets, and are provided in Supp. Sec 5. We trained each model until the convergence of validation loss (note \emph{loss} instead of MAE), defined as the relative change in the validation loss between successive steps being less than 0.1\%. 
For selecting the hyper-parameters of each model ($\beta$ in PATHFINDER, etc.), we calculated the validation NMAE (which we found to be better than the validation loss). A detailed description of all hyper-parameters is provided in Supp. Section 5. The implementation of all the techniques, i.e. PATHFINDER, MDN, MAF, and NI, was performed in \texttt{TensorFlow}~\cite{tensorflow2015-whitepaper} and \texttt{TensorFlow probability}~\cite{dillon2017tensorflow}.

\subsection{Simulation Results: Toy Examples}\label{sec:sim_result:toy_example}

\textbf{Setup}:
We considered three different toy forward mappings for estimating pseudoinverses: $ r\myeq\cos(2\pi\theta)\myplus\epsilon,\ \theta\in[0,3];\  r\myeq(\theta^2-4)^2\myplus\epsilon,\ \theta\in[-3,3];\ r\myeq e^{-\frac{\theta^2}{2}}\myplus\epsilon,\ \theta\in[-3,3],$  where $\epsilon\sim\mathcal{N}(0,0.01)$ (Gaussian distribution with mean $0$ and variance $0.01$). To create a dataset of size $N$ for a toy mapping, i.e. $\D=\{\theta_i,r_i\}_{i=1}^N$, $N$ samples of $\theta$ were uniformly randomly sampled from that toy forward mapping's respective domain and substituted into each toy model to obtain the  corresponding $\{r_i\}_{i=1}^N$. 

To characterize the data-efficiency of each technique, i.e. MDN, MAF, NI, and, PATHFINDER, in estimating the pseudoinverses of these toy mappings, we performed the following study: For each toy forward mapping, we start with a training dataset having $10$ datapoints and keep increasing the dataset in increments of $10$ datapoints until one of the techniques produces NMAE less than $15\%$ (the threshold of 15\% was decided arbitrarily). We record the NMAE for all four techniques at this particular training dataset size. The corresponding dataset sizes at which the NMAE were calculated for each toy mapping are $20$ for $\cos(\theta)$, $10$ for $e^{-\frac{\theta^2}{2}}$, and $50$ for $(\theta^2-4)^2$. 

\textbf{Results}: The NMAE for all the four techniques and the three different toy mappings are listed in Table~\ref{tab:1:toy_example}.  We observe that, PATHFINDER has the smallest NMAE among all the competing methods across all the toy forward mappings. This result, demonstrates that PATHFINDER requires the least amount of data to estimate pseudoinverse with reasonable accuracy. {Results remain qualitatively the same when the noise variance is increased to $10\%$ of the range of each mapping (see Supp. Sec. 7)}

\subsection{Simulation Results: Electrical Stimulation in Neuron Models}\label{sec:sim_result:neuron}
\textbf{Setup}: The setup of the simulation is detailed in Supp. Sec. 4. Briefly, we explore the use of electrical waveforms to stimulate an excitatory pyramidal neuron (Pyr neurons) and an inhibitory parvalbumin-expressing neurons (PV neurons). Modulating the firing rate (neural response) of excitatory and inhibitory neurons using electrical currents has clinical relevance, e.g., in stopping seizures~\cite{avoli2016specific, mahmud2016differential}. We adapt the simulation setup of~\citet{mahmud2016differential}. We replace the single-compartment neuron models used in~\citet{mahmud2016differential} with more realistic
multi-compartment neuron models taken from the Allen Cell Type Database~\cite{de2020large}. Both neurons were simulated in the NEURON software~\cite{carnevale2006neuron}, using the \texttt{Allen SDK} package~\cite{de2020large} in \texttt{python}~\cite{10.5555/1593511}. The stimulus parameters that describe the waveforms were chosen to be the coefficients of $50$ sinusoids, namely, $u_{\boldtheta}(t) = \sum_{i=1}^{50}\theta_i\sin(2\pi (i-1) t)$, where $\boldtheta = \begin{bmatrix}\theta_1&\hdots&\theta_{50}\end{bmatrix}.$ The duration of the waveform was fixed at 200 ms. Parameters ($\boldtheta_i$'s) were randomly sampled from a uniform hyper-sphere of radius 2 nA to encourage a more uniform distribution of power across waveforms. The corresponding firing rate, defined as $\frac{\text{Total Number of Spikes}}{200 \text{ ms}}$, of each neuron model was recorded. We define our neural response as $\mathbf{r}\myeq[r_{1},r_{2}]^T$, where $r_{1}$ and $r_{2}$ are the firing rates for the Pyr and the PV neuron, respectively. 


We calculated the average NMAE for PATHFINDER, MDN, MAF, and NI across $50$ independent trials for training dataset sizes: $50, 100, 250,500,$ and $1000$. The training dataset sizes were chosen to reflect the actual dataset sizes reported in the literature. Typically, such datasets are collected using in-vitro patch-clamp electrophysiology~\cite{perkins2006cell} which is extremely expensive. Publications employing these techniques usually report dataset sizes of around 200-300 samples (waveforms)~\cite{spix2021population, mastro2017cell}.

\textbf{Results}: Fig.~\ref{fig:4:neuron}\textbf{c} and Fig.~\ref{fig:4:neuron}\textbf{d} show the individual NMAE for the PV and Pyr models, respectively. Fig.~\ref{fig:4:neuron}\textbf{a} and Fig.~\ref{fig:4:neuron}\textbf{b} show the average and the maximum of the NMAE across the two neuron types. As with the toy examples, we observe that PATHFINDER significantly outperforms the other techniques for low sample size (Fig~\ref{fig:4:neuron}\textbf{a} and Fig~\ref{fig:4:neuron}\textbf{b}). The gains are even better than the toy example, as even at $1000$ training datapoints (as compared to the $50$ datapoints in the toy example), PATHFINDER continues to outperform the other techniques. MAF performs better than MDN for the neuron model case. This result is aligned with the result from the original works for MAF~\cite{papamakarios2019sequential}. This aspect is different for our toy examples (Table~\ref{tab:1:toy_example}), where MDN performs better than MAF. NI seems to perform the worst out of all the methods tested for the neuron model case. This is in contrast to its good performance in the toy example case. We discuss these observations in more detail in Sec.~\ref{sec:discussion}. 

\begin{figure}[t]
\centering
\includegraphics[scale=0.39]{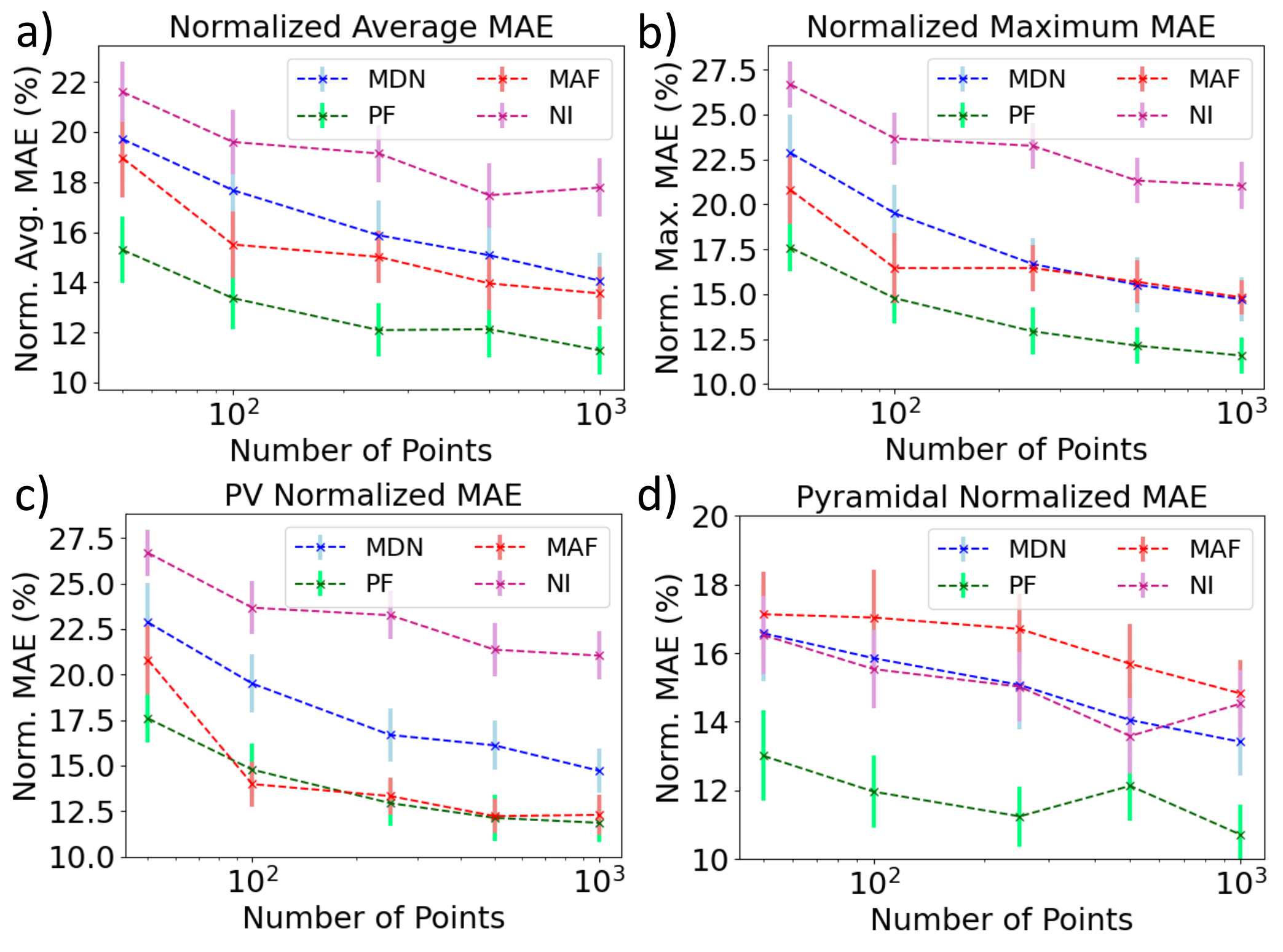}

\caption{Fig.~\ref{fig:4:neuron}\textbf{a} and~\ref{fig:4:neuron}\textbf{b} show the average and maximum of the NMAE across both neuron types obtained by PATHFINDER (PF), MDN, MAF, and NI at different training dataset sizes for the neuron model example (see Sec.~\ref{sec:sim_result:neuron}). Fig.~\ref{fig:4:neuron}\textbf{c} and Fig.~\ref{fig:4:neuron}\textbf{d} show the individual NMAE for PV and Pyr neurons, respectively, for all the 4 techniques at different training dataset sizes. The values of NMAE at each training dataset size were averaged across $50$ independent trials and are shown with a $99\%$ confidence interval.}
\label{fig:4:neuron}
\end{figure}

\section{Discussion and Limitations}\label{sec:discussion}
\begin{figure*}[t]
\centering
 \includegraphics[scale=0.43]{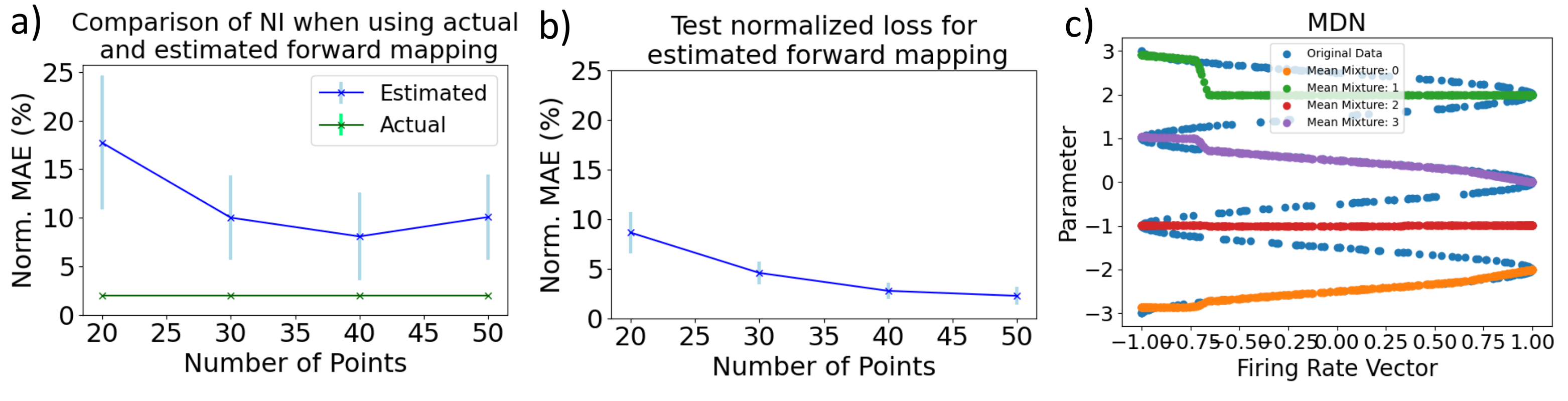}

\caption{Fig.~\ref{fig:7:misc}\textbf{a} represents the average test NMAE for the NI framework with estimated and actual forward mapping across 50 different trials (see Sec.~\ref{sec:discussion}). Fig.~\ref{fig:7:misc}\textbf{b} presents the test loss for the estimated forward mapping of the $\cos(\cdot)$ mapping in NI. The color bars are the $99\%$ confidence interval. Fig~\ref{fig:7:misc}\textbf{c} shows the conditional means estimated by the MDN for the $\cos(\theta)$ mapping.}
\label{fig:7:misc}
\end{figure*}
\textbf{Data-Efficiency of PATHFINDER}: A potential reason for the data-efficiency of PATHFINDER (seen in Sec.~\ref{sec:results}) is what we call the \emph{maximization bias}. Maximization bias is the phenomenon that PATHFINDER tends to estimate the pseudoinverse with the largest number of datapoints. E.g., in $\cos(\cdot)$ mapping, assume that PATHFINDER only estimates 1 of the 6 pseudoinverses corresponding to the restricted domains $[0,0.5]$,  $[0.5,1]$, $[1,1.5]$, $[1.5,2]$, $[2,2.5]$ and $[2.5,3]$. Let $n_i$ be the number of datapoints that lie in the restricted domain of the $i$-th pseudoinverse. Recall that the regularizer in the PATHFINDER loss tries to give non-zero weights to as many datapoints as possible (Sec.~\ref{sec:algo_des}). Consequently, the regularizer encourages PATHFINDER to choose the restricted domain having the largest number of datapoints, i.e., $\max_{i} n_i$. 

For a size-$n$ dataset sampled using a uniform distribution, we show (in Supp. Sec.~6) that $\mathbb{E}\left[\max_{i} n_i\right] = \frac{n}{6}+c\sqrt{n}$ ($\mathbb{E}[\cdot]$ is the expectation, and $c$ is a constant). Note that, while the \textit{expected} number of datapoints in any one restricted domain is $n/6$, {the largest restricted domain has extra $c\sqrt{n}$ datapoints. PATHFINDER loss  encourages inversion in just this restricted domain, and thus, is able to harness these extra datapoints thereby lowering the required overall sample size}. On the other hand, MDN, MAF, and NI, in one way or another, try to estimate the whole forward mapping, and are not designed to harness maximization bias. A more detailed explanation is provided in Supp. Sec. 6. We aim to precisely characterize the maximization bias in future works.



\textbf{MDN v/s MAF}:  The number of modes of $p(\boldtheta|\boldR)$, {in most cases of interest}, is determined by the number of parameters $\boldtheta$ that produce the response $\boldR$, e.g. $p(\boldtheta|\boldR)$ for $\cos(\cdot)$ forward mapping has $6$ modes. For all our toy examples, $p(\boldtheta|\boldR)$ has less than $6$ modes but in neuron models, $p(\boldtheta|\boldR)$ has an extremely large number of modes (technically, infinite), due to the dimension of $\boldtheta$ being higher than $\boldR$. At small sample sizes, {MDNs, and in general mixture models, are known to overfit  $p(\boldtheta|\boldR)$ when the number of modes is large~\cite{davis2020use}. This explains why MDNs work well in our toy examples, but not in our neuron models. On the other hand, MAF, being a normalizing flow estimator, avoids overfitting due to a large number of modes and performs better than MDN in neuron models.}



For a $p(\boldtheta|\boldR)$ with large number of modes, MDN can still estimate pseudoinverses reasonably well, as evidenced from our neuron model study, possibly due to \textit{mode collapse}~\cite{theis2015note}. For estimating pseudoinverses, we do not need MDNs to model every mode, but rather just one. {E.g., Fig.~\ref{fig:7:misc}\textbf{c} illustrates that MDNs having fewer mixtures than the total number of modes of $p(\boldtheta|\boldR)$, can still estimate $g$'s pseudoinverse.} Therefore, mode collapse actually helps, not hinders, in estimating a pseudoinverse. However, in MDNs, mode collapse is not ``controlled'' explicitly, and hence on average, performance of MDN suffers. PATHFINDER can be viewed as ``controlled'' mode collapse to learn the pseudoinverse corresponding to the largest restricted domain.

\textbf{Failure of NI in neuron models}: 
To understand the degradation in the performance of NI from our toy examples to the neuron-model examples, we performed the following experiment: {For the $\cos(\cdot)$ mapping, we compared the performance in estimation of $\widehat{g^{-1}}$ in i) traditional NI (where we use the neural network estimate $\widehat{g}$ of $g$ in~\eqref{eq:naive_inversion:inverse_loss}), with ii) using the \textit{actual} mapping $g(\cdot)$ instead of $\widehat{g}$ in~\eqref{eq:naive_inversion:inverse_loss} ($\cos(\cdot)$ in this case)}. The plot of NMAE vs training dataset sizes is shown in Fig.~\ref{fig:7:misc}\textbf{a}. NI using the actual mapping $g(\cdot)$ achieves a small error for even a small number of data points, whereas NI with estimated forward mapping $\widehat{g}$ does not. {The latter observation is surprising because}, from the test loss for the estimated forward mapping (shown in Fig.~\ref{fig:7:misc}\textbf{b}), it might seem that the estimated forward mapping is a good fit (especially at $40$ and $50$ samples). Together, Fig.~\ref{fig:7:misc}\textbf{a} and~\ref{fig:7:misc}\textbf{b} illustrate the challenge of numerically inverting a deep network: {even small inaccuracies in the forward model lead to substantial error upon inversion}. {Worse}, in our neuron models example, the estimated forward mapping has a larger error than toy examples {($10.5$\% even at $1000$ data points)}, due to it being a higher dimensional problem (data requirements for estimating forward mapping grow exponentially in the dimension of $\boldtheta$~\cite{chen2016nonparametric}). Therefore, the error in the forward mapping compounded with the errors introduced due to the challenge of numerically inverting a deep network cause a significant drop in the performance of NI for our neuron-model examples.

\textbf{Limitations of our study}: Validation against real-world data is not performed. This is due to the lack of publicly available datasets for this problem. We aim to collect such datasets for future works (see Supp. Sec. 8). For simulations, we used single cell models of neurons. While existing studies~\cite{gopakumar2019cell} do use single-cell models to test their techniques, more complicated simulation studies using biological networks can be performed, although the rewards from doing more complicated simulation studies might be diminishing.  Active sampling techniques, which can offer substantial improvements, are not considered here and are a logical next step. Research on Density estimation is extensive, and many alternative techniques to MAF and MDN exist~\cite{papamakarios2019normalizing}, and should be explored. Our results are indicative of trends regarding the amount of data required between different techniques, and the absolute number of samples will depend upon many factors such as noise, the dimensionality/size of the waveform family being considered, etc. While our simulation results show that PATHFINDER outperforms existing methods when the dataset size is small in the scenarios investigated, the actual choice of technique will be influenced by the application of interest and needs to be treated on a case by case basis. 
\bibliography{Bibliography}




\end{document}